\documentclass[11pt,twoside]{article} 

\RequirePackage[OT1]{fontenc}
\RequirePackage{amsthm,amsmath,amsfonts,amssymb}
\RequirePackage[colorlinks]{hyperref}
\RequirePackage{epsfig, graphicx, color}
\RequirePackage{times}
\RequirePackage{latexsym}
\RequirePackage{psfrag}
\RequirePackage{color}
\usepackage{natbib}
\usepackage{xcolor}
\usepackage{cleveref}

\usepackage{fullpage}
\usepackage{epsf,epsfig}
\usepackage{fancyheadings}
\usepackage{graphics,graphicx}
\usepackage{enumerate}
\usepackage{algorithmicx,algorithm}
\usepackage{bbm}
\usepackage{mathtools}
\usepackage{tikz,subcaption,amsthm}

\usetikzlibrary{positioning,arrows,decorations,decorations.markings,decorations.pathreplacing,decorations.pathmorphing,shadows,positioning,arrows.meta,matrix,fit}
\usetikzlibrary{shapes,snakes}

\newtheorem{theorem}{Theorem}

\theoremstyle{definition}

\theoremstyle{remark}

\newcommand{\R}{\mathbb{R}}

\newcommand{\one}{\mathbbm{1}}

\def\cX{\mathcal{X}}

\def\pr{\mathbb{P}}

\def\cD{\mathcal{D}}

\newcommand{\vertiii}[1]{{\left\vert\kern-0.25ex\left\vert\kern-0.25ex\left\vert #1 
    \right\vert\kern-0.25ex\right\vert\kern-0.25ex\right\vert}}
    
\title{Localized conformal model selection}

\author{Yuhao Wang$^{1,2}$ and Tengyao Wang$^3$}
\date{$^1$ Tsinghua University\\
$^2$Shanghai Qizhi Institute\\
$^3$ London School of Economics and Political Science}
\begin{document}

\maketitle

\begin{abstract}
We propose a localized conformal model selection framework that integrates local adaptivity with post-selection validity for distribution-free prediction. By performing model selection symmetrically across calibration points using upper and lower surrogate intervals, we construct a data-dependent safe index set that contains the oracle model and preserves exchangeability. The resulting ensemble procedure retains exact finite-sample marginal coverage while adapting to spatial heterogeneity and model complexity. Simulations demonstrate substantial reductions in interval length compared to the best fixed model, especially in heterogeneous and low-noise settings.
\end{abstract}

\section{Introduction}

Uncertainty quantification is a central problem in modern statistical learning. Among available approaches, conformal prediction has emerged as a particularly attractive framework \citep{vovk2005algorithmic, shafer2008tutorial, lei2018distribution, angelopoulos2023conformal}. It provides finite-sample, distribution-free prediction intervals under minimal assumptions. These guarantees hold regardless of model complexity, making conformal prediction applicable to high-dimensional regression, nonparametric smoothing, and modern machine learning methods such as random forests and neural networks \citep{romano2019conformalized, barber2021predictive}.

In this paper, we study conformal prediction for regression problems in which the goal is to construct a prediction interval for a future response at a new covariate value, based on previously observed data drawn from an unknown distribution. In many practical settings, rather than relying on a single predictive model, practitioners consider a collection of candidate regression estimators that differ in smoothness, complexity, or modelling assumptions. Selecting or combining these models in a data-driven way is often crucial for achieving good predictive performance.

However, standard conformal prediction methods are designed for a fixed predictive model and do not directly account for the additional uncertainty introduced by model selection. When the same data are used both to select a model and to construct prediction intervals, the exchangeability assumptions underpinning conformal validity are typically violated. As a result, naively applying conformal prediction after model selection often leads to systematic undercoverage, a phenomenon that has been widely observed in practice \citep{liang2024conformal, hegazy2025valid}.

Recent work has begun to address this challenge. \citet{yang2025selection} and \citet{liang2024conformal} develop frameworks for conformal inference after model selection that preserves finite-sample coverage guarantees without sample splitting by correcting the selection bias inherent in naive procedures. Related ideas also appear in the growing literature on selective conformal inference, which studies post-selection validity under various data-dependent selection rules \citep{jin2025confidence}. While these approaches successfully control selection-induced bias, they operate at a global or population level and do not allow the selected model to vary across the covariate space. This limitation can be restrictive in nonparametric settings where the underlying regression function exhibits spatially heterogeneous structure.

A complementary line of work focuses on local adaptivity. Localized Conformal Prediction, introduced by \citet{guan2023localized}, modifies classical conformal prediction by weighting calibration residuals according to their proximity to the test point, yielding prediction intervals that adapt to local noise levels and model fit and can substantially reduce interval length in regions with simpler structure. Recent extensions explore related localization ideas, where local weights or randomization are used to achieve improved local coverage properties beyond marginal guarantees \citep{hore2023conformal}. More broadly, a growing literature investigates conditional or approximately conditional coverage guarantees, highlighting both the possibilities and fundamental limitations of achieving coverage that adapts to covariates \citep{foygel2021limits, gibbs2025conformal}. However, these localized methods assume a fixed predictive model and do not directly address the additional uncertainty arising from model selection or aggregation. 

The goal of this paper is to bridge these two strands of work by developing a conformal prediction framework that is both locally adaptive and valid after model selection. A naive approach would be to select, at each test point, the model that yields the shortest localized conformal interval. However, such a strategy breaks the symmetry required for conformal validity, as the model selection event depends implicitly on the unobserved response at the test point. Our main contribution is a new localized conformal model selection procedure that combines LCP with a safe indexing technique to restore the required symmetry. The key insight is that model selection must be performed symmetrically across all data points, rather than only at the final test point, in order to preserve exchangeability. By explicitly accounting for the unobserved response through upper and lower surrogate conformal intervals, we construct a data-dependent safe index set that is guaranteed to contain the oracle model, defined as the model yielding the shortest expected conformal interval.

This construction leads to a principled ensemble conformal prediction method that adapts both to local structure in the covariates and to model uncertainty, while retaining exact finite-sample coverage guarantees. The resulting prediction intervals automatically favor more complex models in highly oscillatory regions and simpler models in smoother regions, as illustrated in our numerical experiments. To the best of our knowledge, this is the first conformal inference framework that simultaneously achieves local adaptivity and post-selection validity in a fully nonparametric setting.

\section{A new localized conformal model selection}

We consider a regression setting in which we observe a calibration dataset 
\[
\cD := \{(X_i, Y_i)\}_{i=1}^n,
\]
consisting of independent and identically distributed observations from an unknown joint distribution on $\cX \times \R$. Our goal is to construct a prediction interval for an unobserved response $Y_{n+1}$ associated with a new covariate value $X_{n+1}$. Throughout, we assume that the augmented sample $\{(X_i,Y_i)\}_{i=1}^{n+1}$ is exchangeable. We are given a finite collection of candidate regression estimators $f_1,\ldots,f_K$, obtained from an independent training sample and are treated as fixed functions.

Since multiple regression estimators $\{f_k\}_{k=1}^K$ are available, an important question is how to combine them in order to construct shorter prediction intervals. 
We address this by employing \emph{localized conformal prediction (LCP)}~\citep{guan2023localized} and by introducing a safe-indexing refinement for model selection. Unlike the classical conformal prediction which constructs prediction sets via using the empirical quantile of regression residuals, the main idea of LCP is to produce a prediction interval for $X_{n+1}$ based on the estimated regression residuals $Y_i - f(X_i)$ and a localizer $H(\cdot, \cdot)$. Here localizer is used to up-weight the residuals whose $X_i$ are close to $X_{n + 1}$; in other words, LCP constructed prediction intervals based on a ``weighted'' empirical quantiles, where the weight of each sample is decided based on how close the sample covariance is to $X_{n + 1}$. As a result, the resulting prediction interval adapts to local noise levels and local model fit, rather than reflecting only global behaviour. Some typical localizers include the exponential kernel $H(X_{n + 1}, X_i) = \exp(-\|X_{n + 1} - X_i\|_2 / h)$ and the Gaussian kernel $H(X_{n + 1}, X_i) = \exp(-\|X_{n + 1} - X_i\|_2^2 / h^2)$ for suitable choice of localizer bandwidths $h$.

Suppose we are given $K$ conditional expectation estimates $f_1, \ldots, f_K$ (from a separate training sample) and a localizer $H$, then for each $f_k$, we may construct a conformalized prediction interval $C_k(\cdot)$ for $f_k$ based weighted empirical quantile of regression residuals, in a similar way as described in~\citet{guan2023localized}. More specifically, given a calibration set $\{(Y_i, X_i)\}_{i = 1}^n$ that is independent from the trained conditional expectation estimates $f_1,\ldots, f_K$, a test covariate $X$ and a model-specific miscoverage level $\gamma$, we may define $C_k(\{(Y_i, X_i)\}_{i = 1}^n, X, \gamma)$ as
\begin{equation}\label{eq:lci}
\begin{aligned}
C_k(\{(Y_i, X_i)\}_{i = 1}^n, & X, \gamma) := [f_k(X) - q, f_k(X) + q], \\
& \mathrm{where}\; q := \min\left\{v: \sum_{i = 1}^n \frac{H(X_i, X)}{\sum_{i = 1}^n H(X_i, X)} \one\{|Y_i - f_k(X_i)| \le v\} \ge 1 - \gamma \right\}.
\end{aligned}
\end{equation}

If the miscoverage level $\gamma$ is fixed in advance and a single model were used, this construction recovers standard localized conformal prediction. The challenge addressed in this paper is how to select or aggregate models without violating finite-sample conformal validity. For $i \in [n]$, define the leave-one-out dataset $\cD_{-i} := \cD \setminus \{(Y_i, X_i)\}$. Under the exchangeability assumption, it is natural to consider the conformalized prediction interval $C_k(\cD_{-i}\cup\{(Y_{n+1},X_{n+1})\}, X_i,\gamma)$ obtained by interchanging the roles of indices $i$ and $n+1$. Note that this is an ``oracle interval'' in the sense that depends on the unobserved response $Y_{n+1}$. In practice, we instead construct upper and lower surrogate intervals that correspond respectively to the hypothetical cases of extreme nonconformity and perfect conformity of $Y_{n+1}$:
\begin{align*}
    C_k^+(\cD_{-i}, X_i, \gamma) &:= C_k(\cD_{-i} \cup \{(\infty, X_{n + 1})\}, X_i, \gamma)\\
    C_k^-(\cD_{-i}, X_i, \gamma) &:= C_k(\cD_{-i} \cup \{(f_k(X_{n+1}), X_{n + 1})\}, X_i, \gamma).
\end{align*}

For each $i \in [n]$, we define the \emph{safe index set}
\[
\mathcal{K}_i(\gamma) := \left\{k: |C_k^-(\cD_{-i}, X_i, \gamma)| \le \min_{k'} |C_{k'}^+(\cD_{-i}, X_i, \gamma)|\right\}.
\]
This set contains all models whose best-case interval length is no larger than the worst-case interval length of the shortest competing model. By construction, 
\[
C_k^-(\cD_{-i},X_i,\gamma)
\subseteq
C_k\bigl(\cD_{-i}\cup\{(X_{n+1},Y_{n+1})\},X_i,\gamma\bigr)
\subseteq
C_k^+(\cD_{-i},X_i,\gamma),
\]
for every $k$.  Hence any model minimizing the oracle interval length must belong to $\mathcal K_i(\gamma)$.

Using the safe index sets $\mathcal{K}_i(\gamma)$, we determine lower and upper bounds on the admissible miscoverage level by evaluating empirical coverage across all calibration points. Specifically, over a discrete grid $\Gamma \subseteq [0,1]$, we define
\begin{equation}
\begin{aligned}
\hat{\gamma}_L(\alpha) & := \max \left\{\gamma \in \Gamma: \frac{1}{n + 1} \sum_{i = 1}^n \one\left\{Y_i \in \cap_{k \in \mathcal{K}_i(\gamma)} C^-_k(\cD_{-i}, X_i, \gamma)\right\} \ge 1 - \alpha\right\}, \\
\hat{\gamma}_U(\alpha) & := \max \left\{\gamma \in \Gamma: \frac{1}{n + 1} \sum_{i = 1}^n \one\left\{Y_i \in \cup_{k \in \mathcal{K}_i(\gamma)} C^+_k(\cD_{-i}, X_i, \gamma)\right\} \ge 1 - \alpha - \frac{1}{n + 1}\right\}.
\end{aligned}
\label{Eq:GammahatLU}
\end{equation}

Finally, the prediction interval for $Y_{n+1}$ is obtained by selecting, for each admissible miscoverage level, the shortest localized conformal interval and taking their union. Formally,
\begin{equation}\label{eq:finalcp}
C_{\mathrm{LCP-MS}}(\cD, X_{n + 1}, \alpha) := \bigcup_{\gamma \in [\hat{\gamma}_L(\alpha), \hat{\gamma}_U(\alpha)] \cap \Gamma} C_{\min}(\cD, X_{n + 1}, \gamma),
\end{equation}
where $C_{\min}(\cD, X_{n + 1}, \gamma)$ denotes the shortest interval among $\{C_k(\cD, X_{n + 1}, \gamma)\}_{k=1}^K$, with ties broken by index order. 

The following theorem establishes the finite-sample validity of the resulting procedure.

\begin{theorem}
Consider a dataset $\{(Y_i, X_i)\}_{i = 1}^{n + 1}$ that are i.i.d.\ realizations from a distribution $\pr$, where $Y_{n + 1}$ is unobserved. Given a sequence of fixed functions $f_1, \ldots, f_K$, we have that the $C_{\mathrm{LCP-MS}}(\cdot)$ constructed in~\eqref{eq:finalcp} satisfies
\[
\pr(Y_{n + 1} \in C_{\mathrm{LCP-MS}}(\cD, X_{n + 1}, \alpha)) \ge 1 - \alpha
\]
for any $\alpha \in [0, 1]$.
\end{theorem}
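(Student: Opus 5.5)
Our plan is to introduce a fully symmetric oracle procedure on the augmented sample $\cD_{n+1}:=\{(X_j,Y_j)\}_{j=1}^{n+1}$ and use exchangeability, then show the observed surrogates bracket the oracle quantities.

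\textbf{Step 1 (oracle scores).} For each $j\in[n+1]$ and $\gamma\in\Gamma$, let $\cD_{n+1,-j}:=\cD_{n+1}\setminus\{(X_j,Y_j)\}$ and $C_k^{(j)}(\gamma):=C_k(\cD_{n+1,-j},X_j,\gamma)$. Let $k_j^*(\gamma)$ index the shortest of $\{C^{(j)}_k(\gamma)\}_k$, with ties broken by index. Define the oracle data-dependent level
\[
\gamma^*:=\max\Bigl\{\gamma\in\Gamma:\ \tfrac{1}{n+1}\textstyle\sum_{j=1}^{n+1}\one\{Y_j\in C^{(j)}_{k_j^*(\gamma)}(\gamma)\}\ge 1-\alpha\Bigr\}.
\]
This is a symmetric function of the $n+1$ data points: permuting them permutes the indicators. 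So the events $E_j:=\{Y_j\in C^{(j)}_{k_j^*(\gamma^*)}(\gamma^*)\}$ are exchangeable. By definition of $\gamma^*$, $\sum_j\one(E_j)\ge(1-\alpha)(n+1)$. Hence $\pr(E_{n+1})\ge 1-\alpha$. The case where the set defining $\gamma^*$ is empty needs a convention, e.g.\ $\gamma=0$ in $\Gamma$ gives full coverage, or else $\alpha$ is trivial. For $j=n+1$ we have $\cD_{n+1,-(n+1)}=\cD$, so $C^{(n+1)}_{k^*_{n+1}(\gamma)}(\gamma)=C_{\min}(\cD,X_{n+1},\gamma)$. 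It therefore suffices to show $\gamma^*\in[\hat\gamma_L(\alpha),\hat\gamma_U(\alpha)]$ on every realization.

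\textbf{Step 2 (bracketing).} For $i\le n$, the nesting stated before the theorem gives $C^-_k(\cD_{-i},X_i,\gamma)\subseteq C^{(i)}_k(\gamma)\subseteq C^+_k(\cD_{-i},X_i,\gamma)$. One should check that the weighted quantile is monotone in the inserted residual and that the weights are unaffected, which holds because $X_{n+1}$ is fixed. Consequently $k_i^*(\gamma)\in\mathcal K_i(\gamma)$. It follows that
\[
\one\{Y_i\in\cap_{k\in\mathcal K_i(\gamma)}C^-_k\}\le\one\{Y_i\in C^{(i)}_{k_i^*}\}\le\one\{Y_i\in\cup_{k\in\mathcal K_i(\gamma)}C^+_k\}.
\]
Summing over $i\le n$, and bounding the $(n+1)$-th oracle indicator between $0$ and $1$, we get, with $L(\gamma),O(\gamma),U(\gamma)$ the three normalized sums,
\[
L(\gamma)\le O(\gamma),\qquad O(\gamma)\le U(\gamma)+\tfrac1{n+1},
\]
where $O$ includes the $j=n+1$ term. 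Therefore $O(\gamma^*)\ge1-\alpha$ implies $U(\gamma^*)\ge 1-\alpha-\frac1{n+1}$, so $\gamma^*\le\hat\gamma_U$ by maximality. Likewise $L(\hat\gamma_L)\ge1-\alpha$ implies $O(\hat\gamma_L)\ge1-\alpha$, so $\hat\gamma_L\le\gamma^*$. Thus $C_{\min}(\cD,X_{n+1},\gamma^*)\subseteq C_{\mathrm{LCP-MS}}$, and the theorem follows.

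\textbf{Main obstacle.} The delicate point is Step 2's bracketing of the oracle interval by the surrogates. It requires monotonicity of the weighted quantile in the inserted response, and it requires that the lower surrogate, which inserts residual $0$, is genuinely no larger than the oracle. It also needs the tie-breaking to be consistent, so that the oracle minimizer lies in $\mathcal K_i$. A secondary issue is the nonmonotonicity of $O(\gamma)$ in $\gamma$: this is why we use maximality comparisons only at the specific points $\gamma^*$ and $\hat\gamma_L$, rather than arguing that the feasible sets are intervals.
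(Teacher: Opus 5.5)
Your proposal is correct and is essentially the paper's own proof: your $\gamma^*$ is the paper's $\hat\gamma$, the exchangeability step is identical, and your bookkeeping $L(\gamma)\le O(\gamma)\le U(\gamma)+\tfrac{1}{n+1}$ spells out the sandwich $\hat\gamma_L(\alpha)\le\hat\gamma\le\hat\gamma_U(\alpha)$ that the paper derives from the surrogate inclusions in a single line. One correction to your aside on an empty feasible set, a case the paper also passes over: $\gamma=0$ does not give full coverage, because the leave-one-out interval at $X_j$ then has radius equal to the largest residual among the \emph{other} points and need not contain $Y_j$ (with $K=1$, positive weights and distinct residuals, the point with the largest residual is never covered), so for $\alpha<1/(n+1)$ the set defining $\gamma^*$ can be empty and the statement needs either a restriction on $\alpha$ or a convention such as adding a level $\gamma<0$ to $\Gamma$, at which every interval is $\R$.
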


\begin{proof}
    Let $\tilde\cD:=\cD\cup\{(X_{n+1},Y_{n+1})\}$ and 
$\tilde\cD_{-i}:=\tilde\cD\setminus\{(X_i,Y_i)\}$. Define
    \begin{equation}\label{eq:hatalpha}
    \hat{\gamma} := \max\left\{\gamma \in \Gamma \;:\; \frac{1}{n + 1} \sum_{i=1}^{n + 1} \one\{Y_i \in C_{\min}(\tilde\cD_{-i}, X_i, \gamma)\} \ge 1 - \alpha\right\}.
    \end{equation}
    Set
\[
Z_i
:=
\one\{Y_i\in C_{\min}(\tilde\cD_{-i},X_i,\hat\gamma)\}.
\]
    By exchangeability of $\{(X_i,Y_i)\}_{i=1}^{n+1}$ and the permutation invariance of the construction, the variables $Z_1,\ldots,Z_{n+1}$ are exchangeable. By definition of $\hat\gamma$, and using exchangeability, we have 
\[\pr\bigl(
Y_{n+1}\in C_{\min}(\cD,X_{n+1},\hat\gamma)
\bigr) = 
\pr(Z_{n+1}=1)
=
\mathbb E\!\left[\frac{1}{n+1}\sum_{i=1}^{n+1}Z_i\right]
\ge1-\alpha.
\]
Thus, it remains to show that $\hat{\gamma}_L(\alpha) \le \hat{\gamma} \le \hat{\gamma}_U(\alpha)$ for $\hat\gamma_L(\alpha)$ and $\hat\gamma_U(\alpha)$ defined in~\eqref{Eq:GammahatLU}.

To this end, from the surrogate inclusions, we have for all $i \in [n]$ and $\gamma \in\Gamma$ that 
    \[
C_k^-(\cD_{-i}, X_i, \gamma) \subseteq 
C_k(\tilde{\cD}_{-i}, X_i, \gamma) \subseteq C_k^+(\cD_{-i}, X_i, \gamma).
\]
Hence the model $f_k$ whose corresponding interval $C_k(\tilde{\cD}_{-i}, X_i, \gamma)$ is shortest among all $C_{k'}(\tilde{\cD}_{-i}, X_i, \gamma)$ for $k\in\{1,\ldots,K\}$ must belong to the set $\mathcal{K}_i(\gamma)$. Consequently, for all $i \in [n]$, we have
\[
\cap_{k \in \mathcal{K}_i(\gamma)} C^-_k(\cD_{-i}, X_i, \gamma) \subseteq 
C_{\min}(\tilde{\cD}_{-i}, X_i, \gamma) \subseteq \cup_{k \in \mathcal{K}_i(\gamma)} C^+_k(\cD_{-i}, X_i, \gamma).
\]
Then it follows from above,~\eqref{eq:hatalpha}, as well as the definitions of $\hat{\gamma}_L(\alpha)$ and $\hat{\gamma}_U(\alpha)$ in~\eqref{Eq:GammahatLU} that $\hat{\gamma}_L(\alpha) \le \hat{\gamma} \le \hat{\gamma}_U(\alpha)$, which proves the desired result.
\end{proof}

\section{Numerical simulations}
We illustrate the proposed localized conformal model selection (LCP-MS) method through two simulation studies: a parametric setting with locally oscillatory signals and a nonparametric regression setting with varying curvature. In all experiments, we report the average conformal interval length at the target coverage level $1-\alpha=0.9$ based on 100 replications. We compare the proposed LCP-MS ensemble with the best-performing single model chosen in hindsight (denoted ``best\_single'').

We first consider a piecewise sinusoidal model defined as
\[
Y = 
\begin{cases}
\sin(5X) + \varepsilon, & X \in [-5, 0),\\
2\sin(3X) + \varepsilon, & X \in [0, 5],
\end{cases}
\qquad \varepsilon \sim N(0, \sigma^2),
\]
where the regression function changes its frequency and amplitude at $X=0$. 

We fit a family of local parametric models 
\[
M_{\lambda,h}: \quad Y = A\sin(\lambda X + \phi),
\]
where the amplitude $A$ and phase shift $\phi$ are estimated via least squares using only data within a local window $[x-h,\,x+h]$. Each model is indexed by a frequency parameter $\lambda \in \{1,2,3,4,5\}$ and a local bandwidth $h \in \{0.5,1\}$, resulting in 10 candidate models in total. 

We vary the noise level $\sigma \in \{0.1,0.3\}$ and the training/calibration sample size $n \in \{200,500,1000\}$. For each configuration, Table~\ref{tab:parametric} reports the average length of the conformal prediction intervals obtained by the proposed LCP-MS ensemble and by the best individual model.

\begin{table}[ht]
\centering
\caption{Average conformal interval lengths in the parametric setting.}
\label{tab:parametric}
\begin{tabular}{ccccc}
\hline\hline
$n$ & $\sigma$ & localizer bw & ensemble\_len & best\_single \\
\hline
200 & 0.1 & 0.1 & 0.3985 & 0.4865\\
500 & 0.1 & 0.1 & 0.3660 & 0.4445\\
1000 & 0.1 & 0.1 & 0.3470 & 0.4305\\
200 & 0.3 & 0.1 & 1.0920 & 1.1020\\
500 & 0.3 & 0.1 & 1.0950 & 1.0460\\
1000 & 0.3 & 0.1 & 1.0455 & 1.0330\\
200 & 0.1 & 0.3 & 0.3840 & 0.4935\\
500 & 0.1 & 0.3 & 0.3550 & 0.4460\\
1000 & 0.1 & 0.3 & 0.3450 & 0.4295\\
200 & 0.3 & 0.3 & 1.1620 & 1.1130\\
500 & 0.3 & 0.3 & 1.0475 & 1.0480\\
1000 & 0.3 & 0.3 & 1.0155 & 1.0385\\
\hline\hline
\end{tabular}
\end{table}

Across all sample sizes, the proposed LCP-MS ensemble consistently produces shorter intervals than the best single model under low noise ($\sigma=0.1$), demonstrating its benefit from adaptive local model selection. Under higher noise ($\sigma=0.3$), both methods perform similarly, indicating that the advantage of localization diminishes when the signal-to-noise ratio is low. 

We next consider a smooth nonparametric regression model
\[
Y = \sin(X^3) + \varepsilon, \qquad X \in [0,3], \quad \varepsilon \sim N(0, \sigma^2),
\]
where the curvature of the mean function varies with $X$. We fit five Nadaraya--Watson kernel regression estimators with Gaussian kernels and bandwidths $h \in \{0.1, 0.2, 0.4, 0.8, 1.6\}$. The same LCP-MS procedure is applied to combine these models.

We vary $\sigma \in \{0.1, 0.3\}$ and $n \in \{200, 500, 1000, 2000\}$. Table~\ref{tab:nonparametric} summarizes the average interval lengths. In this nonlinear setting, the LCP-MS intervals are substantially shorter, roughly 25–35\% reduction compared with the best single model, while maintaining nominal coverage. The improvement is more pronounced for small $\sigma$ and large $n$, highlighting the ability of the proposed method to adapt to local structure and select the most appropriate kernel bandwidths across regions of differing smoothness.

\begin{table}[htbp]
\centering
\caption{Average conformal interval lengths in the nonparametric setting.}
\label{tab:nonparametric}
\begin{tabular}{ccccc}
\hline\hline
$n$ & $\sigma$ & localizer bw & ensemble\_len & best\_single \\
\hline
200 & 0.1 & 0.3 & 0.9500 & 1.3665\\
500 & 0.1 & 0.3 & 0.9245 & 1.3265\\
1000 & 0.1 & 0.3 & 0.9045 & 1.3175\\
2000 & 0.1 & 0.3 & 0.9000 & 1.3225\\
200 & 0.3 & 0.3 & 1.4580 & 1.5790\\
500 & 0.3 & 0.3 & 1.4145 & 1.5495\\
1000 & 0.3 & 0.3 & 1.4170 & 1.5530\\
2000 & 0.3 & 0.3 & 1.3960 & 1.5390\\
\hline\hline
\end{tabular}
\end{table}

\begin{figure}[htbp]
    \centering
    \includegraphics[width=0.5\linewidth]{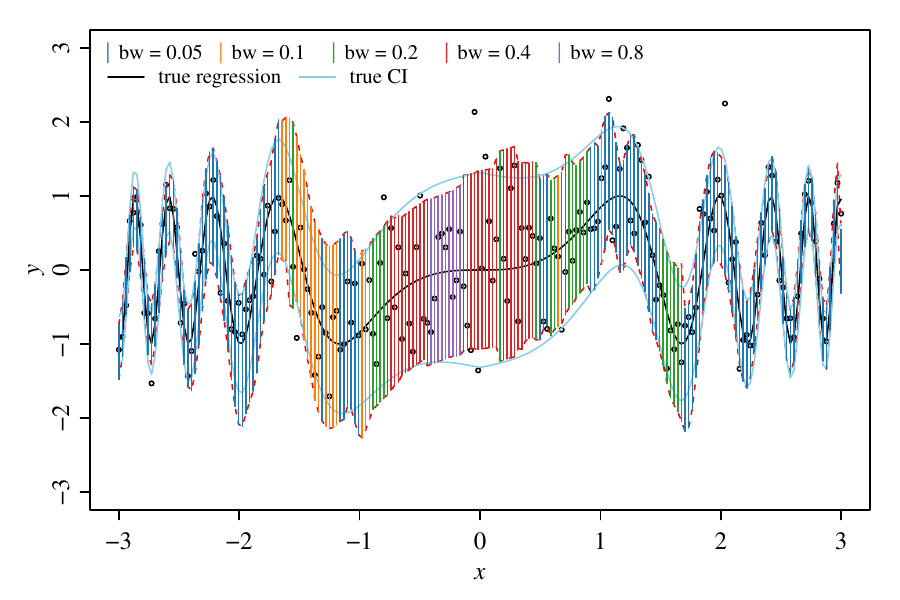}
    \caption{Localized ensemble conformal prediction intervals evaluated at each test point. The black curve denotes the true mean regression function, while the light blue curves indicate the true 90\% confidence bands. Vertical bars represent ensemble conformal prediction intervals, with colours corresponding to different models selected within the ensemble. In highly oscillatory regions, models with smaller bandwidths are predominantly selected (blue), whereas in smoother regions the ensemble favours models with larger bandwidths (red and purple).}
    \label{fig:nadaraya}
\end{figure}

To illustrate the adaptive behaviour underlying the improvements shown in the simulation results, we focus on a representative configuration with $n = 500$, $\sigma = 0.1$ and $h = 0.3$ and examine the resulting localized ensemble conformal intervals in Figure~\ref{fig:nadaraya}. The figure shows that the LCP-MS ensemble adaptively selects different kernel bandwidths across the covariate domain in accordance with the local geometry of the true regression function. In regions where the curvature is high, smaller bandwidth models are predominantly chosen, producing tighter intervals that closely track rapid changes in the signal. In contrast, in smoother regions the ensemble favours larger bandwidths, yielding more stable and efficient inference. This spatially varying model selection provides a clear explanation for the superior performance of the ensemble observed in Table~\ref{tab:nonparametric}, as no single fixed-bandwidth estimator can adapt uniformly well across regions of differing smoothness.

\section{Discussion}
The main technical challenge addressed in the paper is to reconcile local, data dependent model selection with the symmetry requirements of conformal inference. Selecting the shortest localized interval at a test point alone breaks exchangeability. The safe indexing construction resolves this by enforcing selection rules simultaneously across calibration points through surrogate intervals. 

We remark that while the procedure improves interval adaptivity across the covariate space, the overall guarantee remains marginal rather than conditional. Efficiency gains are most pronounced when the candidate family contains a model that approximates the local regression function well. In such settings, and when the best and second best models are well separated, the safe index set $\mathcal{K}_i(\gamma)$ is expected to become asymptotically a singleton, so the method effectively behaves like a locally optimal selector.

The procedure is computationally heavier than standard conformal prediction due to leave one out constructions and evaluation over multiple models and miscoverage levels, though these computations are naturally parallelizable. While developed for regression, the symmetry based safe indexing idea may extend to other adaptive conformal settings where data dependent decisions threaten exchangeability.

\bibliographystyle{custom}
\bibliography{reference}
\end{document}